
\documentclass[twoside,a4paper,12pt]{article}
\usepackage{amsmath,amssymb,dsfont,vmargin,amsthm,cases,stmaryrd}  
\usepackage[english]{babel}
\usepackage[T1]{fontenc}    
\usepackage{graphicx}       
\usepackage{psfrag}         
\newtheorem{proposition}{Proposition}

\newtheorem{definition}{Definition}
\parindent0cm

\title{Conditional inference with a complex sampling: exact computations and Monte Carlo estimations}
\author{
        {\large François Coquet} and {\large \'{E}ric Lesage}\\
        { \small \textit{CREST(ENSAI) and IRMAR, Campus de Ker Lann, F-35172 BRUZ}}\\
            \small \textit{Université européenne de Bretagne, France}
        }
\begin{document}
\maketitle
\begin{abstract}
In survey statistics, the usual technique
for estimating a population
total consists in summing appropriately weighted variable values for
the units in the sample. Different weighting systems exit: sampling
weights, GREG weights or calibration weights for example.
\\
In this article, we propose to use the
inverse of conditional
inclusion probabilities as weighting system. We study examples where
an auxiliary information enables to perform an a posteriori
stratification of the population. We show that, in these cases,
exact computations of the conditional weights are possible.
\\
When the auxiliary information consists in the knowledge of a
quantitative variable for all the units of the population, then we
show that the conditional weights can be estimated via Monte-Carlo
simulations. This method is applied to outlier and strata-Jumper
adjustments.
\end{abstract}

Keywords: Auxiliary information; Conditional inference; Finite population; Inclusion probabilities; Monte Carlo methods; Sampling weights

\section{Introduction}
The purpose of this article is to give a
systematic use of the auxiliary information at the
estimation phase by the means of Monte Carlo methods, in a design based approach. \\

\medskip
In survey sampling, we often face a situation where we use
information about the population (auxiliary information) available
only at the estimation phase. For example, this information can be
provided by an administration file
 available only posterior to the collection stage.
Another example would be the number of respondents to a survey.
 It is classical to deal with the non-response mechanism by a second
 sampling phase (often Poisson sampling conditional to the size of the sample).
 The size of the respondents sample is known only after the collection.

 \medskip
This information can be compared to its counterpart estimated by the
means of the sample. A significant difference typically reveals an
unbalanced sample. In order to take this discrepancy into account,
it is necessary to re-evaluate our estimations. In practice, two
main technics exist: the model-assisted approach (ratio estimator,
post-stratification estimator, regression estimator) and the
calibration approach. The conditional approach we will develop in
this article has been so far mainly a theoretical concept because it
involves rather complex computations of the inclusion probabilities.
The use of Monte-Carlo methods could be a novelty that would enable
the use of conditional approach in practice.
In particular, it seems to be very helpful for the treatment of outliers and strata jumpers. \\

Conditional inference in survey sampling means that, at the
estimation phase, the sample selection is modelized by means of a
conditional probability. Hence, expectation and variance of the
estimators are computed according to this conditional sampling
probability. Moreover, we are thus provided with conditional
sampling weights with better properties than the original sampling
weights, in the sense
that they lead to a better balanced sample (or calibrated sample).\\

Conditional inference is not a new topic and several authors have
studied the conditional expectation and variance of estimators,
among them: Rao (1985), Robinson(1987), Tillé (1998, 1999) and
Andersson (2004). Moreover, one can see that the problematic of
conditional inference is close to inference in the context of
rejective sampling design. The difference is that in rejective
sampling, the conditioning event is controlled by the design,
whereas, in conditional
inference, the realization of the event is observed. \\

In section 2, the classical framework of finite population sampling and some notations are presented. \\

In section 3, we discuss the well-known setting of simple random
sampling where we condition on the sizes of the sub-samples on
strata (a posteriori stratification). This leads to an alternative
estimator to the classical HT estimator. While a large part of the
literature deals with the notion of correction of conditional bias,
we will directly use the concept of conditional HT estimator (Tillé,
1998), which seems more natural under conditional inference. A
simulation study will be performed in order to compare the accuracy
of the conditional strategy
to the traditional one.\\

In section 4, the sampling design is a Poisson sampling conditional to sample size $n$ (also called conditional Poisson
sampling of size $n$). We use again the information about the sub-samples sizes
to condition on. We show that the conditional probability corresponds exactly
to a stratified conditional Poisson sampling and we give recursive formula
that enables the calculation of the conditional inclusion probabilities. These results are new.\\

In section 5, we use a new conditioning statistic. Following Tillé
(1998, 1999), we use the non-conditional HT estimation of the mean
of the auxiliary variable to condition on. Whereas Tillé uses
asymptotical arguments in order to approximate the conditional
inclusion probabilities, we prefer to perform Monte Carlo
simulations to address a non-asymptotic setting. Note that this idea
of using independent replications of the sampling scheme in order to
estimate inclusion probabilities when the
sampling design is complex has been already proposed by Fattorini (2006) and Thompson and Wu (2008).\\

In section 6, we apply this method to practical examples: outlier and strata jumper in  business survey. This new method to deal with outliers gives good results.

\section{The context}

Let $U$ be a finite population of size $N$. The statistical units of
the population are indexed by a label $k \in \{1, ..., N\}$. A
random sample without replacement $s$ is selected using a
probability (sampling design) $p(.)$. $\mathcal{S}$ is the set of
the possible samples $s$. $I_{[k \in s]}$ is the indicator variable
which is equal to one when the unit $k$ is in the sample and $0$
otherwise. The size of the sample is $n(s)=|s|$. Let $B_k=\{s \in
\mathcal{S}, k \in s \}=\{s \in \mathcal{S}, I_{[k \in s]}=1 \}$ be
the set of samples that contain $k$. For a fixed individual $k$, let
$\pi_k=p(B_k)$ be the inclusion probability and let $d_k=\frac{1}{\pi_k}$ be its
sampling weight. For any variable $z$ that takes the value $z_k$ on the $U$-unit $k$,
the sum $t_z=\sum_{k \in U}z_k$ is referred to as the total of $z$ over $U$.
$\widehat{t}_{z, \pi}=\sum_{k \in s}\frac{1}{\pi_k}z_k$ is the Horvitz-Thompson estimator of the total $t_z$. \\
\newline
Let $x$ be an auxiliary variable that takes the value $x_k $ for the
individual $k$. The $x_k$ are assumed to be known for all the units
of $U$. Such auxiliary information is often used at the sampling
stage in order to improve the sampling design. For example, if the
auxiliary variable is a categorical variable then the sampling can
be stratified. If the auxiliary variable is quantitative, looking
for a balanced sampling on the total of $x$ is a natural idea. These
methods reduce the size of the initial set of admissible samples.
In the second example, $\mathcal{S}_{balanced}=\{s \in \mathcal{S}, \widehat{t}_{x, \pi}=t_x\}$. \\
We wish to use auxiliary information  \emph{after} the sample
selection, that is to take advantage of information such as the
number of units sampled in each stratum or the estimation of the
total $t_x$ given by the Horvitz-Thompson estimator. Let us take an
example where the sample consists in $20$ men and $80$ women, drawn
by a simple random sampling of size $n=100$ among a total population
of $N=200$ with equal inclusion probabilities $\pi_k=0.5$. And let
us assume that we are given \textit{a posteriori} the additional
information that the population has $100$ men and $100$ women. Then
it is hard to maintain anymore that the inclusion probability for
both men and women was actually $0.5$. It seems more sensible to
consider that the men sampled had indeed a inclusion probability of
$0.2$ and a weight of $5$. Conditional inference aims at giving some
theoretical
support to such feelings. \\

We use the notation $\mathbf{\Phi}(s)$ for the statistic
that will be
used in the conditioning. $\mathbf{\Phi}(s)$ is a random vector that
takes values in $\mathbb{R}^{q}$. In fact, $\mathbf{\Phi}(s)$ will
often be a discrete random vector which takes values in $\{1,...,
n\}^{q}$. At each possible subset $\mathbf{\varphi} \subset
\mathbf{\Phi}(\mathcal{S}) $  corresponds an event
$A_\varphi=\mathbf{\Phi}^{-1}(\mathbf{\varphi}) = \{s \in
\mathcal{S}, \mathbf{\Phi}(s) \in \mathbf{\varphi}\}$.

For example, if the auxiliary variable $x_k$ is the indicator
function of a domain, say $x_k=1$ if the unit $k$ is a man, then we
can choose $\mathbf{\Phi}(s)=\sum_{k \in s}I_{[k \in
domain]}=n_{domain}$ the sample size in the domain (number of men in
the sample). If the auxiliary variable $x_k$ is a quantitative
variable, then we can choose $\mathbf{\Phi}(s)=\sum_{k \in
s}\frac{x_k}{\pi_k}=\widehat{t}_{x, \pi}$ the
Horvitz-Thompson estimator of the total $t_x$.\\

\section{A posteriori Simple Random Sampling Stratification}
\subsection{Classical Inference}
In this section, the sampling design is a simple random sampling
without replacement(SRS) of fixed size $n$; $\mathcal{S}_{SRS}=\{s
\in \mathcal{S}, n(s)=n\}$; $p(s)= 1/\binom{N}{n}$ and the inclusion
probability of each individual $k$ is $\pi_k=n/N$. Let $y$ be the
variable of study. $y$ takes the value $y_k $ for the individual
$k$. The $y_k$ are observed for all the units of the sample. The
Horvitz-Thompson (HT) estimator of the total $t_y=\sum_{k \in U}y_k$
is
$\hat{t}_{y, HT}=\sum_{k \in U}\frac{y_k}{\pi_k}I_{[k \in s]}$. \\

Assume now that the population $U$ is split into $H$ sub-populations $U_h$ called strata.
Let $N_h=|U_h|$, $h \in \{1,..., H\}$ be the auxiliary information to be taken into account. We split the sample $s$ into
$H$ sub-samples $s_h$ defined by $s_h=s \cap U_h$. Let $n_h(s)=|s_h|$ be the size of the sub-sample $s_h$. \\
\newline
Ideally, to use the auxiliary information at the sampling stage
would be best. Here, a simple random stratified sampling (SRS
stratified) with a proportional allocation $N_hn/N$ would be more
efficient than a SRS.
For such a SRS stratified, the set of admissible samples is
$\mathcal{S}_{SRS stratified}=\{s \in \mathcal{S}, \forall h \in [1,
H], n_h(s)=N_hn/N \}$, and the sampling design is
$p(s)=\prod_{h \in [1,H]} \frac{1}{\binom{N_h}{n_h}},\quad s\in \mathcal{S}_{SRS stratified}$. Once again, our point is
precisely to consider setting where the auxiliary information becomes available \emph{posterior} to this sampling stage\\

\subsection{Conditional Inference}
The \emph{a posteriori} stratification with an initial SRS was
described by Rao(1985) and Tillé(1998). A sample $s_0$ of size
$n(s_0)=n$ is selected. We observe the sizes of the strata
sub-samples: $n_h(s_0)=\sum_{k \in U_{h}}I_{[k \in s]}$, $h \in [1,
H]$. We assume that $\forall h, n_h(s_0)>0$. We then consider the
event:
$$A_0 = \{ s \in \mathcal{S}, \forall h \in [ 1, H ], n_h(s)=n_h(s_0) \}. $$
It is clear that $s_0 \in A_0$, so $A_0$ is not empty. \\

We consider now the conditional probability: $p^{A_0}(.)=p(./A_0)$
which will be used as far inference is concerned. The conditional
inclusion probabilities are denoted
$$\pi_k^{A_0} = p^{A_0}\left( [I_{[k \in s]}=1] \right) =
\mathbb{E}^{A_0} \left(I_{[k \in s]}\right) =p\left( [I_{[k \in
s]}=1] \cap A_0 \right) / p(A_0).$$ Accordingly, we define the
conditional sampling weights: $d_k^{A_0}=\frac{1}{\pi_k^{A_0}}$.

\begin{proposition}
\begin{enumerate}
  \item The conditional probability $p^{A_0}$ is the
  law of a stratified simple random sampling with allocation $\left(n_1(s_0), ..., n_H(s_0)\right)$,
  \item For a unit $k$ of the strata $h$:
$\pi_k^{A_0}=\displaystyle\frac{n_h(s_0)}{N_h}$ and
$d_k^{A_0}=\displaystyle\frac{N_h}{n_h(s_0)}$.
\end{enumerate}
\end{proposition}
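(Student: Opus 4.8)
The plan is to compute everything directly from the definition $p^{A_0}(s)=p(s\cap A_0)/p(A_0)$, exploiting that the initial design is uniform on $\mathcal{S}_{SRS}$. First I would evaluate $p(A_0)$ by a counting argument: a sample $s$ of size $n$ lies in $A_0$ if and only if, for every $h$, its trace $s_h=s\cap U_h$ has exactly $n_h(s_0)$ elements; since the strata partition $U$, such samples are obtained by choosing each $s_h$ freely inside $U_h$, so their number is $\prod_{h=1}^{H}\binom{N_h}{n_h(s_0)}$, whence $p(A_0)=\prod_{h=1}^{H}\binom{N_h}{n_h(s_0)}\big/\binom{N}{n}$. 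The hypothesis $n_h(s_0)>0$, together with the automatic bound $n_h(s_0)\le N_h$, guarantees that each binomial coefficient is nonzero, so $p(A_0)>0$ and the conditioning is legitimate.

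For assertion 1, take any $s\in\mathcal{S}$. If $s\notin A_0$ then $p^{A_0}(s)=0$. If $s\in A_0$ then $p(s\cap A_0)=p(s)=1/\binom{N}{n}$, and dividing by $p(A_0)$ gives $p^{A_0}(s)=1\big/\prod_{h=1}^{H}\binom{N_h}{n_h(s_0)}$, a value that does not depend on the particular $s$. This is exactly the probability that a stratified simple random sampling with fixed allocation $(n_1(s_0),\dots,n_H(s_0))$ assigns to $s$, and $p^{A_0}$ puts no mass outside the associated admissible set; hence $p^{A_0}$ is that stratified design.

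For assertion 2 I would use the short route: assertion 1 identifies $p^{A_0}$ with a stratified SRS, whose first-order inclusion probabilities are classically $n_h(s_0)/N_h$ for a unit of stratum $h$, giving $d_k^{A_0}=N_h/n_h(s_0)$. Alternatively one argues directly by counting: for $k\in U_h$, the samples of $A_0$ containing $k$ number $\binom{N_h-1}{n_h(s_0)-1}\prod_{h'\neq h}\binom{N_{h'}}{n_{h'}(s_0)}$, and dividing by the cardinality of $A_0$ collapses the product over $h'\neq h$ and leaves $\binom{N_h-1}{n_h(s_0)-1}/\binom{N_h}{n_h(s_0)}=n_h(s_0)/N_h$.

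There is essentially no deep obstacle here; the only points needing care are the combinatorial bookkeeping — making sure the product over strata factors correctly because the $U_h$ are disjoint and cover $U$ — and checking that the conditioning event has positive probability, which is precisely what the assumption $\forall h,\ n_h(s_0)>0$ secures and what makes the conditional weights well defined.
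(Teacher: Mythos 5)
Your proposal is correct and follows essentially the same route as the paper: count the samples in $A_0$ as $\prod_{h}\binom{N_h}{n_h(s_0)}$, note that the conditional law is uniform on $A_0$ and factors as the stratified SRS design, and read off the inclusion probabilities $n_h(s_0)/N_h$. Your additional remarks (the positivity of $p(A_0)$ and the direct counting argument for assertion 2) only make explicit steps the paper leaves implicit.
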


\begin{proof}
$|A_0| = \binom{N_1}{n_1(s_0)}\times...\times \binom{N_H}{n_H(s_0)} $. \\
$\forall s \in A_0$, $p^{A_0}(s)=1/|A_0|$. So we have:
\begin{eqnarray*}
p^{A_0}(s) & = & I_{[s \in A_0]} \frac{1}{\prod_{h \in [1,H]}\binom{N_h}{n_h(s_0)}}\\
& = &I_{[s \in A_0]} * \prod_{h \in [1,H]}\frac{1}{\binom{N_h}{n_h(s_0)}}\\
& = &\prod_{h \in [1,H]}I_{[n_h(s)=n_h(s_0)]} * \frac{1}{\binom{N_h}{n_h(s_0)}}
\end{eqnarray*}
and we recognize the probability law of a stratified simple random sampling
with allocation $\left(n_1(s_0), ..., n_H(s_0)\right)$.\\
2. follows immediately.
\end{proof}

Note that
$$\mathbb{E}^{A_0} \left(\sum_{k \in U}\frac{y_k}{\pi_k}I_{[k \in
s]}\right)= \sum_{k \in U}\frac{y_k}{\pi_k}\pi_k^{A_0} =\sum_h
\sum_{k \in U_h} y_k \frac{N n_h(s_0)}{n N_h}, $$
 so that the
genuine HT estimator is conditionally biased in this framework.

Even if, as Tillé(1998) mentioned, it is possible to correct this
bias simply by retrieving it from the HT estimator, it seems more
coherent to use another linear estimator constructed like the HT
estimator but, this time, using the conditional inclusion
probabilities.

\medskip
Remark that in practice $A_0$ should not be too small. The idea is that
for any unit $k$, we should be able to find a sample $s$ such that $s \in A_0$
and $k \in s$. Thus, all the units of $U$ have a positive conditional inclusion probability. \\

\begin{definition}
  The \textbf{conditional HT estimator }is defined as:
  $$
  \hat{t}_{y, CHT}=\sum_{k \in U}\frac{y_k}{\pi_k^{A_0}}I_{[k \in s]}
  $$
\end{definition}

The conditional Horvitz-Thompson (CHT) etimator is obviously conditionally
unbiased and, therefore, unconditionally unbiased.\\

This estimator is in fact the classical post-stratification
estimator obtained from a model-assisted approach (see Särndal et
Al.(1992) for example). However, conditional inference leads to a
different
derivation of the variance, which appears to be more reliable as we will see in next subsection.\\

\subsection{Simulations}\label{simul}
In this part, we will compare the punctual estimations of a total according to two strategies:
(SRS design + conditional (post-stratification) estimator) and (SRS design + HT estimator).\\
\begin{figure}[h]
\centering
\psfrag{a} {Comparison between $\hat{\mu}_{x, CHT}$ and $\hat{\mu}_{x, CH}$}
\includegraphics[width=7cm]{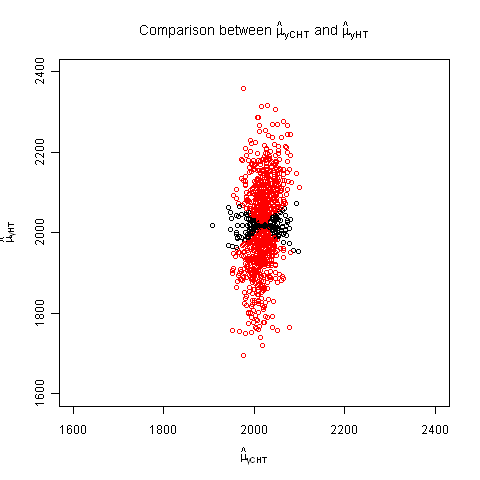}
\caption{Punctual Estimation}
\label{SRS, punctual}
\end{figure}
The population size is $N=500$, the variable $y$ is a quantitative
variable drawn from a uniform distribution over the interval $[0, 4
000]$. The population is divided into 4 strata corresponding to the
values of $y_k$ (if $y_k \in [0,1 000[$
then $k$ belongs to the strata 1 and so on ...). The auxiliary information will be the size of each strata in
the population. In this example, we get $N_1=123$, $N_2=123$, $N_3=132$ and $N_4=122$. \\

The finite population stays fixed and we simulate with the software
R $K=10^3$ simple random samples of size $n=100$. Two estimators of
the mean $\mu_y=\frac{1}{N}\sum_{k \in U}y_k$ are computed and
compared. The first one is the HT estimator: $\hat{\mu}_{y,
HT}=\frac{1}{n}\sum_{k \in s}y_k$ and the second one is the
conditional estimator:
$\hat{\mu}_{y, CHT}=\frac{1}{N}\sum_h \sum_{k \in U_h} y_k \frac{N_h }{n_h(s)}I_{[k \in s]}$.\\

On Figure \ref{SRS, punctual}, we can see the values of
$\hat{\mu}_{y, HT}$ and $\hat{\mu}_{y, CHT}$ for each of the $10^3$
simulations. The  red dots are those for which the conditional
estimation is closer to the true value $\mu_y=2019.01$ than the
unconditional estimation; red dots represents 83.5\% of the
simulations. Moreover, the empirical variance of the conditional
estimator
is clearly smaller than the empirical variance of the unconditional estimator.\\

This is completely coherent with the results obtained for the
post-stratification estimator in an model-assisted approach (see
Särndal et Al.(1992) for example). However, what is new and
fundamental in the conditional approach, is to understand that for
one fixed sample, the conditional bias and variance are much more
reliable than the unconditional bias and variance.
The theoretical study of the conditional variance estimation is a subject still to be developed.  \\

\subsection{Discussion}
\begin{enumerate}
  \item The traditional sampling strategy is defined as
  a couple (sampling design + estimator). We propose to define here the strategy as a triplet (sampling design + conditional sampling probability + estimator).
  \item We have conditioned on the event:
  $A_0 = \{ s \in \mathcal{S}, \forall h \in [1, H]~n_h(s)=n_h(s_0) \}$.
Under a SRS, it is similar to use the HT estimators of the sizes of
the strata in the conditioning, that is to use
$\mathbf{\Phi}(s)=(\hat{N}_1(s),..., \hat{N}_H(s))^t$, where
$\hat{N}_h(s)=\sum_{k \in U_h}\frac{I_{[k \in
s]}}{\pi_k}=\frac{N}{n}n_h(s)$. Then, $A_0 = \{ s \in \mathcal{S},
\mathbf{\Phi}(s)=\mathbf{\Phi}(s_0) \}$.  We will see in Section
\ref{phi} the importance of this remark.
  \item The CHT estimations of the sizes of the strata
  are equal to the true strata sizes $N_h$, which means that the CHT estimations,
  in this setting, have the calibration property for the auxiliary information of the size of the strata.
  Hence, conditional inference gives a theoretical framework for the current practice of calibration
  on auxiliary variables. \\
\end{enumerate}

\section{A Posteriori Conditional Poisson Stratification}

Rao(1985), Tillé(1999) and Andersson (2005) mentioned that a
posteriori stratification in a more complex setting than an an
initial SRS is not a trivial task, and that one must rely on
approximate procedures. In this section, we show that it is possible
to determine the conditional sampling design and to compute exactly
the conditional
inclusion probabilities for an a posteriori stratification with a conditional Poisson sampling of size $n$.\\

\subsection{Conditional Inference}

Let $\tilde{p}(s)= \prod_{k \in s}p_k    \prod_{k \in
\bar{s}}(1-p_k)$ be a Poisson sampling with inclusion probabilities
$\mathbf{p}=(p_1, \ldots, p_N)^{t}$, where $p_k \in ]0,1]$ and
$\bar{s}$ is the
complement of $s$ in $U$. Under a Poisson sampling, the units are selected independently. \\
By means of rejective technics, a conditional Poisson sampling of
size $n$ can be implemented from the Poisson sampling. Then, the
sampling design is:
$$p(s)=K^{-1}\mathbf{1}_{|s|=n} \prod_{k \in s}p_k    \prod_{k \in \bar{s}}(1-p_k), $$
where $K=\sum_{s, |s|=n}\prod_{k \in s}p_k    \prod_{k \in \bar{s}}(1-p_k)$.\\
The inclusion probabilities $\pi_k = f_k\left(U, \mathbf{p},
n\right)$ may be computed by means of a recursive method:
$$ f_k\left(U, \mathbf{p}, n\right)= \frac{p_k}{1-p_k}\frac{n}{\sum_{l \in U}\frac{p_l}{1-p_l}(1-f_l\left(U,
\mathbf{p}, n-1\right))}\left(1-f_k\left(U, \mathbf{p}, n-1\right)\right)$$
where $f_k\left(U, \mathbf{p}, 0\right)=0$.

This fact was proven by Chen et al.(1994) and one can also see
Deville (2000), Matei and Tillé (2005), and Bondesson(2010). An
alternative proof is given in Annex 1.

It is possible that the initial $\pi_k$ of the conditional Poisson sampling design are known
instead of the $p_k$'s. Chen et al.(1994) have shown that it is possible to inverse the functions
$f_k\left(U, \mathbf{p}, n\right)$ by the means of an algorithm which is an application
of the Newton method. One can see also Deville (2000) who gave an enhanced algorithm.\\
\newline
Assume that a posteriori, thanks to some auxiliary information, the
population is stratified in $H$ strata $U_h$, $h \in [1,H]$. The
size of the strata $U_h$ is known to be equal to $N_h$, and the size
of the sub-sample $s_h$ into $U_h$ is $n_h(s_0)>0$.
We consider the event $A_0 = \{ s \in \mathcal{S}, \forall h \in [ 1, H ], n_h(s)=n_h(s_0) \}$.\\
\medskip

\begin{proposition}
With an initial conditional Poisson sampling of size $n$:
\begin{enumerate}
  \item The probability conditional to the sub-samples sizes of the "a posteriori strata", $p^{A_0}(s)=p(s/A_0)$, is
   the probability law of a \textbf{stratified sampling }with
   (independent) \textbf{conditional Poisson sampling of size $n_h(s_o)$ in each stratum},
  \item The conditional inclusion probability $\pi^{A_0}_k$ of an element $k$
  of the strata $U_h$ is the inclusion probability of a conditional Poisson
  sampling of size $n_h(s_o)$ in a population of size $N_h$.
\end{enumerate}
\end{proposition}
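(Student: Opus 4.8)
The plan is to mimic the structure of the proof of Proposition 1 (the SRS case): compute the conditional probability $p^{A_0}(s)$ explicitly and recognize it as a product over strata of conditional-Poisson-of-fixed-size laws. First I would observe that, since the event $A_0$ fixes the sizes $n_h(s)=n_h(s_0)$ in every stratum, any $s\in A_0$ automatically has total size $\sum_h n_h(s_0)=n$, so the indicator $\mathbf{1}_{|s|=n}$ in $p(s)$ is already implied by $\mathbf{1}_{[s\in A_0]}$ and contributes nothing extra. Then I would write $p^{A_0}(s)=p(s\cap A_0)/p(A_0)$ and expand the numerator: for $s\in A_0$,
\[
p(s\cap A_0)=K^{-1}\prod_{k\in s}p_k\prod_{k\in\bar s}(1-p_k)
=K^{-1}\prod_{h\in[1,H]}\Bigl(\prod_{k\in s_h}p_k\prod_{k\in U_h\setminus s_h}(1-p_k)\Bigr),
\]
using that $U$ is the disjoint union of the $U_h$ and $s$ the disjoint union of the $s_h$.

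Next I would compute the normalizing constant $p(A_0)$ by summing the above over all $s\in A_0$. Because the constraint $s\in A_0$ decouples across strata (choose $s_h\subset U_h$ with $|s_h|=n_h(s_0)$ independently for each $h$), the sum factorizes:
\[
p(A_0)=K^{-1}\prod_{h\in[1,H]}\Bigl(\sum_{\substack{s_h\subset U_h\\|s_h|=n_h(s_0)}}\prod_{k\in s_h}p_k\prod_{k\in U_h\setminus s_h}(1-p_k)\Bigr)
=K^{-1}\prod_{h\in[1,H]}K_h,
\]
where $K_h=\sum_{s_h\subset U_h,\,|s_h|=n_h(s_0)}\prod_{k\in s_h}p_k\prod_{k\in U_h\setminus s_h}(1-p_k)$ is exactly the normalizing constant of a conditional Poisson sampling of size $n_h(s_0)$ in the population $U_h$ with inclusion-parameters $(p_k)_{k\in U_h}$. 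Dividing, the $K^{-1}$ cancels and I obtain
\[
p^{A_0}(s)=\prod_{h\in[1,H]} K_h^{-1}\,\mathbf{1}_{[|s_h|=n_h(s_0)]}\prod_{k\in s_h}p_k\prod_{k\in U_h\setminus s_h}(1-p_k),
\]
which is precisely the law of $H$ independent conditional Poisson samplings of sizes $n_h(s_0)$, one in each stratum. This proves part 1, and part 2 follows immediately: the marginal law of $s_h$ under $p^{A_0}$ is the conditional Poisson sampling of size $n_h(s_0)$ in $U_h$ (a population of size $N_h$), so $\pi_k^{A_0}$ for $k\in U_h$ is the inclusion probability of that design, computable by the recursion $f_k(U_h,(p_l)_{l\in U_h},n_h(s_0))$ quoted above.

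I do not expect a serious obstacle here; the proof is essentially a factorization argument. The one point that deserves care is the harmless but crucial remark that $\mathbf{1}_{|s|=n}$ is redundant on $A_0$ — if one forgot this, the algebra would still go through but the presentation would be muddied. A second small point worth stating explicitly is the standing assumption $n_h(s_0)>0$ for all $h$ (already made when defining $A_0$), which guarantees each $K_h>0$ so that the conditional law is well-defined and every unit has positive conditional inclusion probability, echoing the remark after Proposition 1.
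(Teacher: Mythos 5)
Your proof is correct and follows essentially the same route as the paper: factor the numerator over strata and show the denominator $p(A_0)$ factorizes into the per-stratum normalizing constants, the paper phrasing this via independence of the strata counts under the underlying Poisson law $\tilde p$ while you sum the product form directly — the same computation. Your explicit remarks that $\mathbf{1}_{|s|=n}$ is redundant on $A_0$ and that $n_h(s_0)>0$ ensures $K_h>0$ are points the paper uses implicitly, so nothing is missing.
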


\begin{proof}
1. For a conditional Poisson of fixe size n, a vector $(p_1, \ldots,
p_N)^{t}$ exists, where $p_k \in ]0,1]$, such that:
$$p(s)=K^{-1} \mathbf{1}_{|s|=n}\prod_{k \in s}p_k    \prod_{k \in \bar{s}}(1-p_k), $$
where $K=\sum_{s, |s|=n}\prod_{k \in s}p_k    \prod_{k \in \bar{s}}(1-p_k)$.\\
We remind that $A_0 = \{ s \in \mathcal{S}, \forall h \in [ 1, H ], n_h(s)=n_h(s_0) \}$\\
Then:
\begin{eqnarray*}
 p(A_0) & =& K^{-1} \tilde{p}\left(\bigcap_{h \in [ 1, H ]} [n_h(s)=n_h(s_0)]\right)\\
        &   = & K^{-1}\prod_{h \in [ 1, H ]}\tilde{p}\left( [n_h(s)=n_h(s_0)], \right)\\
\end{eqnarray*}
where, $\tilde{p}(.)$ is the law of the original Poisson sampling.
Let $s \in A_0$, then:
\begin{eqnarray*}
 p^{A_0}(s) &=&\frac{p(s)}{p(A_0)} \\
            &=& \frac{K^{-1} \prod_{h=1,\ldots, H}\left[\prod_{k \in s_h}p_k \prod_{k \in \bar{s}_h}(1-p_k)\right]}{K^{-1}\prod_{h \in [ 1, H ]}\tilde{p}( [n_h(s)=n_h(s_0)])}\\
            &=&\prod_{h=1,\ldots, H}
                        \frac{\prod_{k \in s_h}p_k \prod_{k \in \bar{s}_h}(1-p_k)}
                        {\tilde{p}( [n_h(s)=n_h(s_0)])}\\
            &=&\prod_{h=1,\ldots, H}
                        \frac{\prod_{k \in s_h}p_k \prod_{k \in \bar{s}_h}(1-p_k)}
                        {\sum_{s_h, |s_h|=n_h(s_0)}\prod_{k \in s-h}p_k    \prod_{k \in \bar{s}_h}(1-p_k)},
\end{eqnarray*}
which is the sampling design of a stratified
sampling with independent conditional Poisson sampling of size $n_h(s_o)$ in each stratum.\\
2. follows immediately.
\end{proof}

\begin{definition}
In the context of conditional inference on the
sub-sample sizes of
posteriori strata, under an initial conditional Poisson sampling of
size $n$, the \textbf{conditional HT estimator} of the total $t_y$
is: $$\hat{t}_{y,CHT}=\sum_{k \in s}\frac{y_k}{\pi^{A_0}_k}.$$
\end{definition}

The conditional variance can be estimated by means of one of the
approximated variance formulae
developed for the conditional Poisson sampling of size $n$. See for example Matei and Tillé(2005), or Andersson(2004).\\

\subsection{Simulations}

\begin{figure}[h]
\centering
\includegraphics[width=7cm]{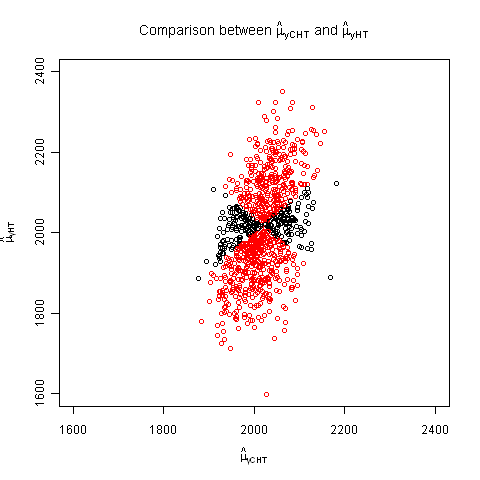}
\caption{Punctual Estimation}
\label{Poisson, punctual}
\end{figure}
 We take the same population as in subsection \ref{simul}.
 The sampling design is now a conditional Poisson sampling of size $n=100$.
 The probabilities $p_k$ of the underlying Poisson design have been generated
 randomly, in order that $\sum_{k \in U} p_k=n$ and $p_k \in [0.13 ; 0.27]$. \\
 $K=10^3$ simulations were performed. Figure \ref{Poisson, punctual} shows that
 the punctual estimation of the mean of $y$ is globally better for conditional inference.
 According to $77.3 \%$ of the simulations the conditional estimator is better
 than the unconditional estimator (red dots).
 The empirical variance as well is clearly better for the conditional estimator.

\subsection{Discussion}
 This method allows to compute exact conditional
 inclusion probabilities in an "a posteriori stratification" under conditional Poisson of size $n$.
 However, one can figure out that this method can be used for any unequal
 probabilities sampling design, had the sampling frame been randomly sorted.

\section{Conditioning on the Horwitz-Thompson estimator of an auxiliary variable}\label{phi}
In the previous sections, we used the sub-sample sizes in the strata
$n_h(s)$ to condition on. The good performances of this conditional
approach result from the fact that the sizes of the sub-sample are
important characteristics of the sample that are often used at the
sampling stage. So, it was not surprising that the use of this
information at the estimation stage would enhance
 the conditional estimators.\\

Another statistic that characterizes the representativeness of a
sample is its HT estimator of the mean $\mu_x$ (or total $t_x$) of
an auxiliary variable. This statistic is used at the sampling stage
in balanced sampling for example. So, as the sub-sample sizes into
the strata, this statistic should produce good results in a
conditional approach restraining the inference to the samples for
which the HT estimation
of $\mu_x$ are equal to the value $\hat{\mu}_0 = \hat{\mu}_{x, HT}(s_0)$ of the selected sample $s_0$ .\\

In fact, we want the (conditional) set of the possible samples to be
large enough in order that all conditional inclusion probabilities
be different from zero. It is therefore convenient to consider the
set of samples that give HT estimations not necessarily strictly
equal to $\hat{\mu}_0$ but close to $\hat{\mu}_0$.
Let $\varphi=[\hat{\mu}_0-\varepsilon, \hat{\mu}_0+\varepsilon]$, for some $\varepsilon >0$. \\

The set $A_\varphi$ of possible samples in our conditional approach
will be:
$$   A_\varphi=  \{ s \in \mathcal{S}, \: \hat{\mu}_{x, HT}(s) \in [\widehat{\mu}_0-\varepsilon, \hat{\mu}_0+\varepsilon]   \} .$$
The conditional inclusion probability of a unit $k$ is:
\begin{eqnarray*}
\pi_k^{A_\varphi} & = & p \left(  [k \in s] /\:  \left[\hat{\mu}_{x, HT}(s)  \in [\hat{\mu}_0-\varepsilon, \hat{\mu}_0+\varepsilon]  \right]            \right)  \\
& = &\frac {p\left( \{s \in \mathcal{S},\: k \in s \text{ and }
\hat{\mu}_{x, HT}(s)  \in [\hat{\mu}_0-\varepsilon,
\hat{\mu}_0+\varepsilon] \} \right)}          {p(A_\varphi)}.
\end{eqnarray*}
\\

If $\hat{\mu}_0=\mu_X$ then we are in a good configuration, because we are in
a balanced sampling situation and the $\pi_k^{A_\varphi}$ will certainly stay close to the $\pi_k$.\\
If $\widehat{\mu}_0 \gg \mu_X$ say, then the sample $s_0$ is
unbalanced, which means that in average, its units have a too large
contribution ${x_k}/{\pi_k}$, either because they are too big ($x_k$
large) or too heavy ($d_k=\frac{1}{\pi_k}$ too large). In this case,
the samples in $A_\varphi$ are also ill-balanced, because balanced
on $\widehat{\mu}_0$ instead of $\mu_X$: $\sum_{k \in s}
\frac{x_k}{\pi_k} \approx \widehat{\mu}_0 $. But conditioning on
this information will improve the estimation. Indeed, the
$\pi_k^{A_\varphi}$ will be different from the $\pi_k$. For example,
a unit $k$ with a big contribution ($\frac{x_k}{\pi_k}$ large) has
more chance to be in a sample of $A_\varphi$ than a unit $l$ with a
small contribution. So, we can expect that $\pi_k^{A_\varphi} >
\pi_k$ and $\pi_l^{A_\varphi} < \pi_l$.
And, in consequence, the
conditional weight $d_{k}^{\varphi}$ will be lower
than $d_{k}$ and $d_{l}^{\varphi}$ higher than $d_{l}$, which will "balance" the samples of $A_\varphi$.\\
\newline
\textbf{Discussion}:
\begin{itemize}
\item we can use different ways in order to define the subset $\varphi$. One way is to use
the distribution function of $\Phi(s)$, denoted $G(u)$ and to define
$\varphi$ as a symmetric interval:
$$\varphi=\left[G^{-1}(\max \{G(\Phi(s_0))-\frac{\alpha}{2}, 0\}),
G^{-1}(\min \{G(   \Phi(s_0))+\frac{\alpha}{2}, 1    )\}\right],$$ where $\alpha= 5\%$ for example. \\

Hence,
$$A_\varphi=\{s \in \mathcal{S}, \Phi(s) \in \left[G^{-1}(\max
\{G(\Phi(s_0))-\frac{\alpha}{2}, 0\}),
G^{-1}(\min \{G(   \Phi(s_0))+\frac{\alpha}{2}, 1    )\}\right]\},$$
and $p(A_\varphi) \leq \alpha$.\\

As the \emph{cdf} $G(u)$ is unknown in general, one has to replace
it by an estimated \emph{cdf} of $\Phi(s)$, denoted
$\hat{G}_{K}(u)$, computed by
means of simulations. \\
\end{itemize}

\section{Generalization: Conditional Inference Based on Monte Carlo
simulations.}\label{MC}

In this section, we consider a general initial sample design $p(s)$
with the inclusion probabilities $\pi_k$. We condition on the event
$A_\varphi=\mathbf{\Phi}^{-1}(\mathbf{\varphi}) = \{s \in
\mathcal{S}, \mathbf{\Phi}(s) \in \mathbf{\varphi}\}.$ For example,
we can use $\Phi(s)=\sum_{k \in s}\frac{\mathbf{x}_k}{\pi_k}$ the
unconditional HT estimator of $t_x$ and $\varphi=[\varphi_1,
\varphi_2]$ an interval that contains $\Phi(s_0)=\sum_{k \in
s_0}\frac{\mathbf{x}_k}{\pi_k}$, the HT estimation of $t_x$ with the
selected sample $s_0$. In other words, we will take into account the
information
that the HT estimator of the total of the auxiliary variable $x$ lies in some region $\varphi$.\\
\newline
The mathematical expression of $\pi_k^{A_\varphi}$ is
straightforward:
$$\pi_k^{A_\varphi}=p([k \in s]/A_\varphi)
=\frac{\sum_{s}p(s)\mathbf{1}_{s\in A_\varphi} \mathbf{1}_{[k \in
s]}}{p(A_\varphi)}.$$ But effective computation of the
$\pi_k^{A_\varphi}$'s may be not trivial if the distribution of
$\mathbf{\Phi}$ is complex. Tillé(1998) used an asymptotical
approach to solve this problem when $\Phi(s)=\sum_{k \in
s}\frac{x_k}{\pi_k}\mathbf{1}_{[k \in s]}$;
he has used normal approximations for the conditional and unconditional laws of $\Phi$. \\
In the previous sections, we have given examples where we were able
to compute the $\pi_k^{A_\varphi}$'s (and actually the
$p^{A_\varphi}(s)$'s) exactly. In this section, we give a general
Monte Carlo method to compute the $\pi_k^{A_\varphi}$.

\subsection{Monte Carlo}


We will use Monte Carlo simulations to estimate
$\mathbb{E}(\mathbf{1}_{A_\varphi}\mathbf{1}_{[k \in s]})$ and
$\mathbb{E}(\mathbf{1}_{A_\varphi})$. We repeat independently $K$
times the sample selection with the sampling design $p(s)$, thus
obtaining a set of samples $(s_1, \ldots, s_K)$. For each simulation
$i$, we compute $\mathbf{\Phi}(s_i)$ and $I_{A_\varphi}(s_i)$. Then
we compute $N+1$ statistics:
\begin{eqnarray*}
    M^{A_\varphi} &=&    \sum_{i=1}^{K}  \mathbf{1}_{A_\varphi}(s_i)   \\
   \forall k \in U, M_k^{\varphi}     &=&  \sum_{i=1}^{K}  \mathbf{1}_{A_\varphi}(s_i) \mathbf{1}_{[k \in s_i]}
\end{eqnarray*}

We obtain a consistent estimator of $\pi_k^{A_\varphi}$, as $K\rightarrow +\infty$:
\begin{equation}\label{bayes2}
    \hat{\pi}_k^{A_\varphi} =   \frac{M_k^{\varphi}/K}{ M^{A_\varphi}/K}=\frac{M_k^{\varphi}}{M^{A_\varphi}}
\end{equation}

\subsection{Point and variance estimations in conditional inference}
\begin{definition}
The Monte Carlo estimator of the total $t_y$ is the conditional
Horvitz-Thompson estimator of $t_y$ after replacing the conditional
inclusion probabilities by their Monte Carlo approximations:
$$
\hat{t}_{y,MC}=\sum_{k \in s_0}\frac{1}{\hat{\pi}_k^{A_\varphi}}y_k
$$
The Monte Carlo estimator of the variance of $\hat{t}_{y,MC}$ is:
$$
\widehat{\mathbb{V}}(\hat{t}_{y,MC})=\sum_{k,l \in s_0}\frac{1}{\hat{\pi}_{k,l}^{A_\varphi}}\frac{y_k}{\hat{\pi}_k^{A_\varphi}} \frac{y_l}{\hat{\pi}_l^{A_\varphi}}
(\hat{\pi}_{k,l}^{A_\varphi} - \hat{\pi}_k^{A_\varphi} \hat{\pi}_l^{A_\varphi}),
$$
where
$$\hat{\pi}_{k,l}^{A_\varphi}
=\frac{\sum_{i=1}^{K}  t_y\mathbf{1}_{A_\varphi}(s_i) \mathbf{1}_{[k
\in s_i]} \mathbf{1}_{[l \in s_i]}} { \sum_{i=1}^{K}
\mathbf{1}_{A_{\varphi}}(s_i) }.$$

\end{definition}
Fattorini(2006) established that $\hat{t}_{y,MC}$ is asymptotically unbiased as $M^{A_\varphi} \rightarrow \infty $,
and that its mean squared error converges to the variance of $\hat{t}_{y, HT}$.\\

Thompson and Wu (2008) studied the rate of convergence of the
estimators $\hat{ \pi }_k^{A_\varphi}$ and of the estimator
$\hat{t}_{y,MC}$ following Chebychev's inequality. Using normal
approximation instead of the Chebychev's inequality gives more
precise confidence intervals. We have thus a new confidence interval
for $\hat{\pi}_k^{A_\varphi}$:
$$p\left(|\hat{\pi}_k^{A_\varphi}-\pi_k^{A_\varphi}|<F^{-1}((1-\alpha)/2)\sqrt{\frac{1}{4M^{A_\varphi}}}\right)
\leq \alpha,$$ where $F$ is the distribution function of the normal
law $\mathcal{N}(0,1)$.

As for the relative bias, standard computation leads to:
\begin{eqnarray}
p\left(
\frac{|\hat{t}_{y, CHT}-\tilde{t}_{y, CHT}|}{\hat{t}_{y, CHT}} \leq \varepsilon
\right)
&\geq&
1-4 \times \sum_{k \in s}\left[1-F\left(\frac{\epsilon}{1+\epsilon}\sqrt{M^{A_\varphi}\pi_k^{A_\varphi}}\right) \right] \nonumber\\
&\geq&
1-4  n \frac{1}{\sqrt{2 \pi}}\frac{1+\epsilon} { \sqrt{M^{A_\varphi}\epsilon^2 \pi_{0}}} \cdot e^{-\left(\frac{M^{A_\varphi}\epsilon^2}{(1+\epsilon)^2} \pi_0\right)},\label{rbnormal}
\end{eqnarray}

where $\pi_0=\min \{\pi_k^{A_\varphi}, k \in U\}$. We used the inequality $1-F(u)\leq\frac{1}{\sqrt{2 \pi}} \frac{e^{-u^2}}{u}$ which is verified for large $u$. \\

The number $K$ of simulations is set so that
$\sum_{i=1}^{K}I_{A_\varphi}(s_i)$  reaches a pre-established
$M^{A_\varphi}$ value. Because of our conditional framework, $K$ is
a stochastic variable which follows a negative binomial distribution
and we have $E(K)=\displaystyle\frac{M^{A_\varphi}}{p(A_\varphi)} $.
For instance, if $p(A_\varphi)=0.05=5\%$, with $M^{A_\varphi}=10^6$, we expect $E(K)=2.10^7$ simulations.\\

\section{Conditional Inference Based on Monte Carlo Method in Order to Adjust for Outlier and Strata Jumper}

We will apply the above ideas to two examples close to situations
that can be found in establishments surveys: outlier and strata
jumper.

We consider an establishments survey, performed in year "\emph{n+1}",
and addressing year "\emph{n}". The auxiliary information $x$ which
is the turnover of the year "\emph{n}" is not known at the sampling
stage but is known at the estimation stage (this information may
come from, say, the fiscal administration).

\subsection{Outlier}

In this section, the auxiliary variable $x$ is simulated following a
gaussian law, more precisely $x_k\sim \mathfrak{N}(8 ~000, (2~
000)^{2})$ excepted for unit $k=1$ for which we assume that
$x_1=50~000$. The unit $k=1$ is an outlier. The variable of interest
$y$ is simulated by the linear model
$$y_k~=~1000~+~0.2~x_k~+~u_k,$$ where $u_k \sim \mathfrak{N}(0,
(500)^{2})$, $u_k$ is independent from $x_k$. The outcomes are $\mu_x= 8~531$ and
$\mu_y=2~695$.
\\

We assume that the sampling design of the establishments survey is a
SRS of size $n=20$ out of the population $U$ of size $N=100$ and
that the selected sample $s_0$ contains the unit $k=1$. For this
example, we have repeated the sample selection until the unit $1$
has been selected in $s_0$.
\\
We obtain $\Phi(s_0)=\hat{\mu}_{x, HT}(s_0)=9~970$, which is $17\%$ over the true value $\mu_x= 8~531$ and
  $\hat{\mu}_{y, HT}(s_0)=3~039$ (recall that the true value of $\mu_y$ is $2~695$).
\\
We set $\Phi$ and $\varphi$ as in section \ref{phi} and we use Monte
Carlo simulations in order to compute the conditional inclusion
probabilities
$\hat{\pi}_k^{A_\varphi}$. Each simulation is a selection of a sample following
a SRS of size $n=20$from the fixed population $U$. Recall that the value of $x_k$
will eventually be known for any unit $k\in U$. \\

Actually, we use two sets of simulations. The first set
is performed in order to estimate the \emph{cdf} of the statistic
$\Phi(s)=\hat{\mu}_{x, HT}(s)$ which will be used to condition on.
This estimated \emph{cdf} will enable us to construct the interval
$\varphi$.  More precisely, we choose the interval $\varphi=[9~793,
10~110]$ by the means of the estimated \emph{cdf} of
$\Phi(s)=\hat{\mu}_{y, HT}(s)$ and so that
 $p\left(\left[\hat{\mu}_{x, HT}(s) \in [9~793, \hat{\mu}_{x, HT}(s_0) ]\right]\right)=\frac{\alpha}{2}=2.5\%
 =p\left(\left[\hat{\mu}_{x, HT}(s) \in [\hat{\mu}_{x, HT}(s_0), 10~110 ]\right]\right)$.

 $A_\varphi$ is then the set of the possible samples in our conditional approach:
$$   A_\varphi=  \{ s \in \mathcal{S}, \: \hat{\mu}_{x, HT}(s)  \in [9~793, 10~110]   \} .$$

 Note that $p\left(\left[\hat{\mu}_{x, HT}(s) \in [9~793,
 10~110]\right]\right)=\alpha=5\%$.  $A_\varphi$ typically contains samples that over-estimate the mean of $x$.\\

The second set of Monte Carlo simulations consists in $K=10^{6}$
sample selections with a SRS of size $n=20$ performed in order to
estimate the conditional inclusion probabilities
$\hat{\pi}_k^{A_\varphi}$. $49~782$ ($4.98\%$) simulated samples
fall in $A_\varphi$, and among them, $49~767$ samples contain the
outlier, which correspond to the estimated conditional inclusion
probability of the outlier: $\hat{\pi}_1^{A_\varphi}=0.9997$. It
means that almost all the samples of $A_\varphi$ contain the outlier
that is mainly responsible for the over-estimation because of its
large value of the variable $x$!
\\
The weight of the unit $1$ has changed a lot, it has decreased from
$d_k=\frac{1}{0.2}=5$ to $\hat{d}^{A_\varphi}_k=1.0003$. The
conditional sampling weights of the other units of $s_0$ are more
comparable to
  their initial weights $d_k=5$ (see Figure \ref{Outlier, Conditional Inclusion Probabilities}).
  \\

 The conditional MC estimator
  $\hat{\mu}_{y,MC}(s)=\frac{1}{N}\sum_{k \in s}\frac{y_k}{\hat{\pi}_1^{A_\varphi}}$
  leads to a much better estimation of $\mu_y$: $\hat{\mu}_{y,MC}(s_0)=2~671$.\\

\begin{figure}[h]
\centering
\includegraphics[width=7cm]{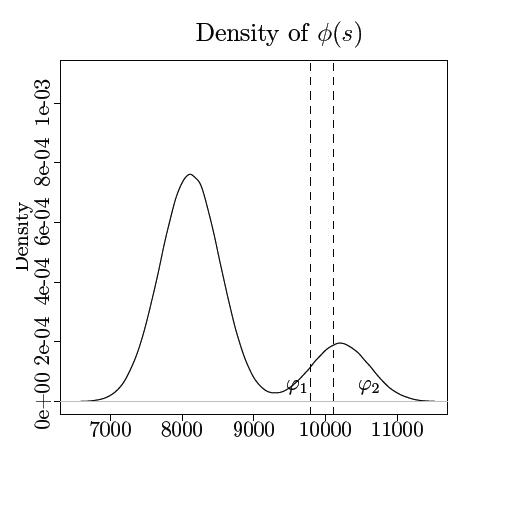}
\label{Outlier, Conditional Inclusion Probabilities}
\end{figure}

Figure \ref{Outlier, Conditional Inclusion Probabilities} gives an
idea of the conditional inclusion probabilities for all the units of
$U$. Moreover, this graph shows that the correction of the sampling
weights $\frac{\hat{d}^{A_\varphi}_k}{d_k}
=\frac{\pi_k}{\hat{\pi}_k^{A_\varphi}}$ is not a monotonic function
of $x_k$, which is in big contrast with calibration techniques which
only uses monotonic functions for weight correction purposes.\\

\begin{figure}[h]
\centering
\includegraphics[width=7cm]{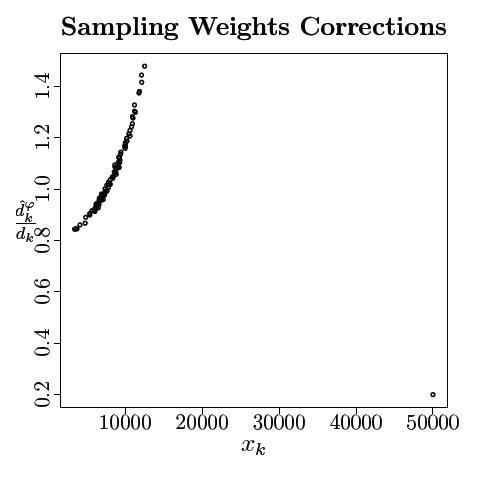}
\caption{Outlier, Density of $\Phi(s)=\hat{\mu}_{x, HT}(s)$}
\label{Outlier, density}
\end{figure}

A last remark concerns the distribution of the statistics
$\Phi(s)=\hat{\mu}_{x, HT}(s)$. Figure \ref{Outlier, density} shows
an unconditional distribution with 2 modes and far from gaussian.
This shows that in presence of outlier,
we can not use the method of Tillé (1999), which assumes a normal distribution for $\hat{\mu}_{x, HT}(s)$. \\

\subsection{Strata Jumper}

In this section, the population $U$ is divided into 2
sub-populations: the small firms and the large firms. Let us say
that the size is appreciated thanks to the turnover of the firm.
Official statistics have to be disseminated for this 2 different
sub-populations. Hence, the survey statistician has to split the
population into 2 strata corresponding to the sub-populations. This
may not be an easy job because the size of firms can evolve from one
year to another.

Here we assume that, at the time when the sample is selected, the
statistician does not know yet the auxiliary information $x$ of the
turnover of the firm for the year "\emph{n}", more precisely the
strata the firm belongs to for the year "\emph{n}". Let us assume
that he only knows this information for the previous
year,"\emph{n-1}". This information is denoted by $z$. In practice,
small firms are very numerous and the sampling rate for this strata is chosen low.
On the contrary, large firms are less numerous and their sampling rate is high. \\

When a unit is selected among the small firms but eventually happens
to be a large unit of year "\emph{n}", we call it a strata jumper.
At the estimation stage, when the information $x$ becomes available,
this unit will obviously be transferred to strata 2 . This will
bring a problem, not due to its $y$-value (which may well be typical
in strata 2) but to its sampling weight, computed according to
strata 1 (the small firms), and which will appear to
be very large in comparison to the other units in strata 2 at the estimation stage.\\

In our simulations, the population $U$ is split in 2 strata, by
means of the auxiliary variable $z$: $U^{z}_1$, of size
$N^{z}_1=10~000$,
is the strata of presumed small firms and $U^{z}_2$, of size $N^{z}_2=100$, the strata of presumed large firms. \\
The auxiliary variable $x$, which is the turnover of the year
"\emph{n}" known after collection, is simulated under a gaussian law
$\mathfrak{N}(8~000, (2~ 000)^{2})$ for the units of the strata
$U^{z}_2$ and for one selected unit of the strata $U^{z}_1$. Let us
say that this unit, the strata jumper, is unit $1$.

Our simulation gives $x_1=8~002$. The variable of interest $y$ is
simulated by the linear model $y_k~=~1000~+~0.2~x_k~+~u_k$, where
$u_k \sim \mathfrak{N}(0, (500)^{2})$, $u_k$ and $x_k$ independent.
We do not simulate the value of $x$ and $y$ for the other units of
the strata $U^{z}_1$ because we will focus on the estimation of the
mean of $y$ for the sub-population of large firms of year $n$
$U^{x}_2$: $\mu_{y,2}=\frac{1}{N_2}\sum_{k \in U^{x}_2} y_k$. We
find
$\mu_{x,2}=8~138$ and $\mu_{y,2}$ is $2~606$.\\

The sampling design of the establishments survey is a stratified SRS
of size $n_1=400$ in $U^{z}_1$ and $n_2=20$ in $U^{z}_2$. We assume
that the selected sample $s_0$ contains the unit $k=1$. In practice,
we repeat
the sample selection until the unit $1$ (the strata jumper) has been selected. \\

As previously, $\Phi$ and $\varphi$ are defined as in Section
\ref{phi}.

We use Monte Carlo simulations in order to compute
 the conditional inclusion probabilities $\hat{\pi}_k^{A_\varphi}$. A simulation is
 a selection of a sample with stratified SRS of size $n_1=400$ in $U^{z}_1$ and $n_2=20$ in $U^{z}_2$. \\

We choose the statistic $\Phi(s)=\hat{\mu}_{x,2, HT}(s)$ in order to
condition on. $K=10^{6}$ simulations are performed in
order to estimate the \emph{cdf} of $\Phi(s)$ and the conditional
inclusion probabilities.

Our simulations give $\Phi(s_0)=\hat{\mu}_{x,2, HT}(s_0)=9~510$,
which is far from the true value $\mu_{x,2}=8~138$
 and $\hat{\mu}_{y,2, HT}(s_0)=3~357$ (recall that the true value of $\mu_{y,2}$ is $2~606$).
 We choose the interval $\varphi=[8~961, 10~342]$ by the means of the estimated \emph{cdf} of $\Phi(s)=\hat{\mu}_{x,2, HT}(s)$ and so that $p\left(\left[\hat{\mu}_{x,2, HT}(s) \in [8~961, 10~342]\right]\right)=\alpha=5\%$. \\
$A_\varphi$ is then the set of the possible samples in our
conditional approach:
$$   A_\varphi=  \{ s \in \mathcal{S}, \: \hat{\mu}_{x, HT}(s)  \in [8~961, 10~342]   \} .$$
All samples in $A_\varphi$ over-estimate the mean of $x$.\\

Among the $10^6$ simulations, $49~778$ simulated samples ($4.98\%$)
belongs to $A_\varphi$. $55\%$ of them contains the strata jumper,
which gives the estimated conditional inclusion probability of the
strata jumper $\hat{\pi}_1^{A_\varphi}=0.55$. It is not a surprise
that the strata jumper is in one sample of $A_\varphi$ over two.
Indeed, its initial sampling
 weight $d_1=\frac{10~000}{400}=25 $ is high in comparison to the weights $d_k=\frac{100}{20}=5$
 of the other selected units
 of the strata $U^{x}_2$, and its contribution $\frac{25 x_1}{N_2}$ contributes to over-estimate the mean of $x$ .\\

 The conditional inclusion probabilities for the other units of $U^{x}_2$ are comparable to
 their initial $\pi_k=0.2$ (see Figure \ref{Strata Jumper, Conditional Inclusion Probabilities}).   \\

 The conditional MC estimator
  $\hat{\mu}_{y, 2, MC}(s)=\frac{1}{N}\sum_{k \in s}\frac{y_k}{\hat{\pi}_1^{A_\varphi}}$
  leads to a better estimation of $\mu_{y,2}$: $\hat{\mu}_{y,2,MC}(s_0)=2~649$.\\

\begin{figure}[h]
\centering
\includegraphics[width=7cm]{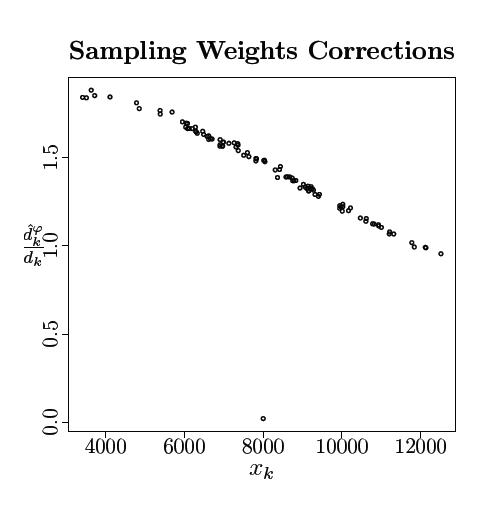}
\caption{Strata Jumper, Sampling Weight Corrections} \label{Strata
Jumper, Conditional Inclusion Probabilities}
\end{figure}

Figure \ref{Strata Jumper, Conditional Inclusion Probabilities} shows that sampling weights correction is here
a non-monotonic function of the variable $x$. We point out that the usual calibration
method would not be able to perform this kind of weights correction
because the calibration function used to correct the weights should be monotonic. \\
Similarly to the outlier setting, the
unconditional distribution of the statistics $\Phi(s)=\hat{\mu}_{x,2, HT}(s)$ has 2 modes and is far from gaussian.\\

\section{Conclusion}

At the estimation stage, a new auxiliary information can reveal that
the selected sample is imbalanced. We have shown that a conditional inference
approach can take into account this information and leads to a more precise
estimator than the unconditional Horvitz-Thompson estimator in the sense that
the conditional estimator is unbiased (conditionally and unconditionally)
and that the conditional variance is more rigorous in order to estimate the precision a posteriori.\\
In practise, we recommend to use Monte Carlo simulations in order to estimate the conditional
inclusion probabilities. \\
This technic seems particularly adapted to the treatment of outliers
and strata-jumpers.

\clearpage

\appendix

\section{Annex 1: Inclusion Probability with Conditional Poisson Sampling}
\begin{proof}
The event $\left[\sum_{l \in U, l \neq k} I_{[l \in s]}=n-1\right]$ is independent of the events $[I_{[k \in s]}=0]$ and  $[I_{[k \in s]}=1]$ in the Poisson model. So we can write:
\begin{eqnarray}
    p\left(\left[\sum_{l \in U, l \neq k} I_{[l \in s]}=n-1\right] \right)
        &=& p\left(\left[\sum_{l \in U, l \neq k} I_{[l \in s]}=n-1\right] / [I_{[k \in s]}=0] \right) \label{eqa2-1}\\
            &=& p\left(\left[\sum_{l \in U, l \neq k} I_{[l \in s]}=n-1\right] / [I_{[k \in s]}=1] \right)\label{eqa2-2}
  \end{eqnarray}
Equation (\ref{eqa2-1}) gives:
{ \footnotesize
\begin{eqnarray*}
p\left(\left[\sum_{l \in U, l \neq k} I_{[l \in s]}=n-1\right] / [I_{[k \in s]}=0] \right)
    &=&
       p\left(\left[\sum_{l \in U, l \neq k} I_{[l \in s]}=n-1\right] / [I_{[k \in s]}=0] \right)
       \\
    &=& p\left(\left[\sum_{l \in U} I_{[l \in s]}=n-1\right] / [I_{[k \in s]}=0] \right)
        \\
&=& \frac{p\left(\left[\sum_{l \in U} I_{[l \in s]}=n-1\right]  \right) p\left(  [I_{[k \in s]}=0]/\left[\sum_{l \in U} I_{[l \in s]}=n-1\right] \right)}{p\left(  [I_{[k \in s]}=0]\right)}
        \\
&=& \frac{p\left(\left[\sum_{l \in U} I_{[l \in s]}=n-1\right]  \right) (1-f_k(N,\mathbf{p},n-1))}{1-p_k},
\end{eqnarray*}
}
and equation (\ref{eqa2-2}) gives:
{ \footnotesize
\begin{eqnarray*}
p\left(\left[\sum_{l \in U, l \neq k} I_{[l \in s]}=n-1\right] / [I_{[k \in s]}=1] \right)
    &=&
       p\left(\left[\sum_{l \in U, l \neq k} I_{[l \in s]}=n-1\right] / [I_{[k \in s]}=1] \right)
       \\
    &=& p\left(\left[\sum_{l \in U} I_{[l \in s]}=n\right] / [I_{[k \in s]}=1] \right)
        \\
&=& \frac{p\left(\left[\sum_{l \in U} I_{[l \in s]}=n\right]  \right) p\left( [I_{[k \in s]}=1] / \left[\sum_{l \in U} I_{[l \in s]}=n\right]  \right)}{p\left(  [I_{[k \in s]}=1]\right)}
        \\
&=& \frac{p\left(\left[\sum_{l \in U} I_{[l \in s]}=n\right]  \right) f_k(N,\mathbf{p},n)}{p_k}.
\end{eqnarray*}
}
So we have:
\begin{eqnarray*}
f_k(U,\mathbf{p},n) &=&
                        (1-f_k(U,\mathbf{p},n-1))
                        \frac{p_k}{1-p_k}\frac{p\left(\left[\sum_{l \in U} I_{[l \in s]}=n-1\right]  \right)}{p\left(\left[\sum_{l \in U} I_{[l \in s]}=n\right]  \right)}\\
&=&
                        (1-f_k(U,\mathbf{p},n-1))
                        \frac{p_k}{1-p_k} h(U,\mathbf{p},n)
\end{eqnarray*}

And we can use the property $\sum_{k \in U} f_k(U,\mathbf{p},n)=\sum_{k \in U} \pi_k=n $ to compute
$ h(U,\mathbf{p},n)$ and conclude.
\end{proof}



\end{document}